\definecolor{backgrey}{rgb}{0.86,0.86,0.86}
\definecolor{dblue}{rgb}{0,0.0,0.5}
\definecolor{dred}{rgb}{0.4,0.2,0}
\definecolor{dgreen}{rgb}{0.0,0.5,0}
\newcommand{\captionfonts}{\small}
\long\def\@makecaption#1#2{%
  \vskip\abovecaptionskip
  \sbox\@tempboxa{{\captionfonts #1: #2}}%
  \ifdim \wd\@tempboxa >\hsize
    {\captionfonts #1: #2\par}
  \else
    \hbox to\hsize{\hfil\box\@tempboxa\hfil}%
  \fi
  \vskip\belowcaptionskip}
\newtheorem{theorem}{Theorem}
\newtheorem{proof}{Proof}
\newtheorem{definition}{Definition}
\newtheorem{lemma}{Lemma}
\title{\LARGE \bf Information Flow Decomposition in Feedback Systems: Linear Time-Invariant Systems with Gaussian Channels}
\author{\quad Bertrand Wechsler, Dan Eilat and Nicolas Limal
}
\begin{document}
\maketitle \thispagestyle{empty} \pagestyle{plain}
\begin{abstract}
In our companion paper \cite{Limal_decomposition}, an information identity decomposition has been derived, which can be interpreted as a law of conservation of information flows in feedback systems. In this paper, we further investigate this decomposition result when specified to linear time-invariant(LTI) systems connected with additive white Gaussian noise(AWGN) channels. It is shown that the quantities in the decomposition are characterized in sensitivity function and the law of conservation is verified.
\end{abstract}
\begin{keywords}
\normalfont \normalsize
Gaussian channel, feedback, information flow, directed information,
\end{keywords}

\section{Introduction}
\indent Consider a feedback system as shown in Fig. \ref{Fig:general_feedback_system} (left). Let $m = x_0$, the information flow through channel $C_2$ can be decomposed into two independent flows as
\begin{equation*}
I(y^{n}\rightarrow e^n) = I(x^{n}\rightarrow e^n) + I(y^n\rightarrow e^n|x_0).
\end{equation*}

\begin{figure}
\begin{center}
\includegraphics[scale=0.70]{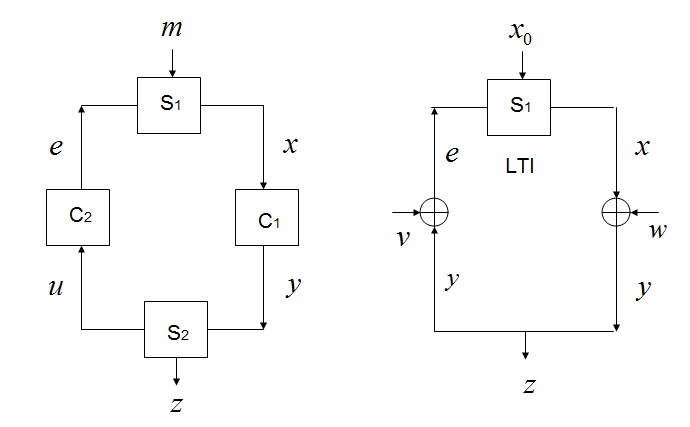}
\caption{Left: a general feedback system; Right: LTI systems connected with AWGN channels.}
\label{Fig:general_feedback_system}
\end{center}
\end{figure}

This equality can be interpreted as a law of conservation of information flows. Quantity $I(x^{n}\rightarrow e^n)$ is the amount of information provided by the external input $x_0$, and the  quantity $I(y^n\rightarrow e^n|x_0)$ is the amount of information provided by the uncertainty in the channel $C_1$ (due to the presence of noise). The sum of these two quantities equals to the total amount of information delivered from system $S_2$ to system $S_1$ through channel $C_2$.\\
\indent In this paper, we turn our attention to Linear-time-invariant (LTI) systems connected with AWGN channels, as shown in Fig. \ref{Fig:general_feedback_system} (right), where system $S_1$ and $S_2$ are respectively assumed to be LTI and unit gain and $(w,v)$ are AWGNs with variance $(\sigma_v^2,\sigma_w^2 )$.  As it will be shown, the limit values of these quantities in the above information flow equation can be characterized in sensitivity function and, furthermore, verify the law of conservation of information flow. Although the topic of distributed systems connected with one noisy channel and one noiseless channel (e.g. \cite{Kim08_capacity_fb, Kim10,Ofer07,Tati09,Permuter09}) or both noisy channels (e.g., \cite{Draper06,chong_arXiv11, chong_isit11, Kim07,Chong11_isit_bounds,Chong11_allerton_UpperBound,Martins08, Chong11_allerton_finiteCapacity, Chance10, Chong12_allerton_sideInfo, Limal_infoIdentity}) is of interest and has attracted much attention in the last decade, most of the work has concentrated on communication systems with feedback, where the coding schemes are not necessarily LTI system. The result of this paper contributes to the understanding of noisy feedback systems in the control-theoretic perspective, where the systems are mostly restricted to be LTI.\\

\section{Decomposition on LTI Feedback Systems with Gaussian Channels}

\indent First of all, we revisit the definition of directed information which will be repeatedly used in the paper.
\begin{definition}
Given random sequences $x^n$, $y^n$, the directed information from $x^n$ to $y^n$ is defined as
\begin{equation*}
I(x^n \rightarrow y^n) = \sum_{i=1}^{n}I(x^i;y_i|y^{i-1}).
\end{equation*}
\end{definition}

\begin{definition}
Consider feedback systems as shown in \ref{Fig:general_feedback_system}, the sensitivity transfer function $\mathcal{S}(e^{j2\pi\theta})$ is defined as the transfer function from the external disturbances $w$ to the process output $y$ or the measurement noise $v$ to the system inputs $e$.
\end{definition}
The sensitivity function is introduced and widely used in control theory. According to the definition, lower values of $|\mathcal{S}(e^{j2\pi\theta})|$ suggest further attenuation of the external disturbances or the measurement noise. Moreover, the sensitivity function also reflects the feedback influence on external disturbances.

\begin{lemma}
Consider a stationary Gaussian process $\lbrace X_i\rbrace_{i=1}^{\infty}$, the relationship between its entropy rate $\bar{h}(X) = \lim_{n\rightarrow \infty} \frac{1}{n} h(X^n)$ and the spectral density $S_x(e^{j\theta})$ is in the following expression,
\begin{equation*}
\bar{h}(X) = \frac{1}{2}\int_{-\frac{1}{2}}^{\frac{1}{2}} \ln{2\pi e S_x(e^{j\theta})}d\theta
\end{equation*}
\label{entropy_rate}
\end{lemma}

For convenience, we again recall the decomposition theorem (law of conservation of information flows) in \cite{Limal_decomposition} as follows. Note that this decomposition is essentially introduced in \cite{derpich2013fundamental} under a general framework.
\begin{theorem}
Consider a feedback system as shown in Fig. \ref{Fig:general_feedback_system} (left). Let $m = x_0$, the information flow through channel $C_2$ can be decomposed into two independent flows as
\begin{equation*}
I(y^{n}\rightarrow e^n) = I(x^{n}\rightarrow e^n) + I(y^n\rightarrow e^n|x_0).
\end{equation*}
\end{theorem}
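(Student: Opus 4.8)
The plan is to push everything down to sums of conditional mutual informations via the definition of directed information, and then extract the identity by chain-rule bookkeeping, invoking only two structural features of the loop in Fig.~\ref{Fig:general_feedback_system}: (i) \emph{causality} --- the signal $e_i$ is produced from $y^i$ by passing through channel $C_2$, whose internal randomness is independent of the exogenous input $x_0$, so that $x_0$ influences $e_i$ only through $y^i$; and (ii) the \emph{in-loop relation} between $x^n$ and $x_0$ --- at each time $x_i$ is obtained from $x_0$ and the already-generated signals $e^{i-1}$ by a fixed (additive, unit-gain) rule, so that, conditioned on $e^{i-1}$, the variables $x_0$ and $x^i$ determine one another.

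First I would expand $I(y^n\rightarrow e^n)=\sum_{i=1}^n I(y^i;e_i\mid e^{i-1})$ and, for each $i$, insert $x_0$:
$$I(y^i;e_i\mid e^{i-1})=I(x_0,y^i;e_i\mid e^{i-1})-I(x_0;e_i\mid e^{i-1},y^i).$$
Feature (i) gives the Markov chain $x_0\rightarrow(e^{i-1},y^i)\rightarrow e_i$, so the subtracted term vanishes; the chain rule then splits the remaining term as
$$I(x_0,y^i;e_i\mid e^{i-1})=I(x_0;e_i\mid e^{i-1})+I(y^i;e_i\mid e^{i-1},x_0).$$
Summing over $i$ yields $I(y^n\rightarrow e^n)=\sum_{i=1}^n I(x_0;e_i\mid e^{i-1})+I(y^n\rightarrow e^n\mid x_0)$. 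By feature (ii), conditioned on $e^{i-1}$ each of $x_0$ and $x^i$ is a function of the other, hence $\sigma(x_0,e^{i-1})=\sigma(x^i,e^{i-1})$ and $I(x_0;e_i\mid e^{i-1})=I(x^i;e_i\mid e^{i-1})$ for every $i$; summing these, $\sum_i I(x_0;e_i\mid e^{i-1})=I(x^n\rightarrow e^n)$, and the claimed identity follows for every $n$ (and hence in the limit, since it holds at each $n$).

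The step I expect to be the main obstacle is precisely nailing down features (i) and (ii) from the interconnection diagram: one must fix which noise variables enter $e_i$, $y_i$, $x_i$ and in what temporal order, and then verify that $x_0$'s only route to $e_i$ is through $y^i$ (equivalently through $x^i$) once $e^{i-1}$ is conditioned out --- this is where the hypotheses that $(w,v)$ are AWGN independent of $x_0$, that $S_2$ has unit gain, and that the loop is causal are actually consumed. The word ``independent'' in the statement is then the observation that $x_0$ is independent of the $C_1$ noise, which is what rightfully assigns the first summand to the external source and the second to the channel uncertainty; with those conditional independences secured, everything else is routine chain-rule algebra.
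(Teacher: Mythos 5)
The paper does not actually prove this theorem: it is recalled verbatim from the companion paper \cite{Limal_decomposition} (and attributed in substance to \cite{derpich2013fundamental}), so there is no in-paper argument to compare yours against. Judged on its own, your skeleton is the right one and is essentially the standard derivation: expand the directed information, insert $x_0$ by the chain rule, kill $I(x_0;e_i\mid e^{i-1},y^i)$ via the Markov chain $x_0\rightarrow(e^{i-1},y^i)\rightarrow e_i$ (the $C_2$ noise is independent of the message), and identify $\sum_i I(y^i;e_i\mid e^{i-1},x_0)$ with $I(y^n\rightarrow e^n\mid x_0)$. All of that is sound.

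The one step that does not survive scrutiny as written is the claim that, conditioned on $e^{i-1}$, ``each of $x_0$ and $x^i$ is a function of the other,'' i.e.\ $\sigma(x_0,e^{i-1})=\sigma(x^i,e^{i-1})$. Only one inclusion is structural: $x^i$ is a deterministic causal function of $(x_0,e^{i-1})$. The reverse inclusion asserts that the encoder $x_0\mapsto x^i$ is invertible for each fixed $e^{i-1}$, which is false for a general system $S_1$ in the left-hand figure --- take $g_1\equiv 0$ and $x_0$ is not recoverable from $(x^i,e^{i-1})$. The identity you need, $I(x_0;e_i\mid e^{i-1})=I(x^i;e_i\mid e^{i-1})$, is nevertheless true, but it requires a second Markov chain rather than a $\sigma$-algebra equality: expand $I(x_0,x^i;e_i\mid e^{i-1})$ two ways, using $I(x^i;e_i\mid x_0,e^{i-1})=0$ (since $x^i$ is a function of $(x_0,e^{i-1})$) for one expansion and $I(x_0;e_i\mid x^i,e^{i-1})=0$ for the other; the latter holds because, given $(x^i,e^{i-1})$, the signal $e_i$ is generated from $x_i$ by fresh channel noise independent of $x_0$. (If the convention behind ``$m=x_0$'' is that $x^i$ already contains $x_0$ as its zeroth coordinate, the issue disappears trivially, but you should state which reading you are using.) With that repair your proof is complete.
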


In what follows, this result is applied to the model as shown in Fig. \ref{Fig:general_feedback_system}(right). All the three quantities are explicitly characterized in sensitivity function and the law of conservation is verified.

\begin{theorem}
Consider LTI feedback systems with AWGN channels as shown in Fig. \ref{Fig:general_feedback_system} (right). Then,
\begin{equation*}
\begin{split}
\lim_{n\rightarrow \infty}\frac{1}{n}I(y^{n}\rightarrow e^n)=&\int_{-\frac{1}{2}}^{\frac{1}{2}} \ln{|\mathcal{S}(e^{j2\pi \theta})|} d\theta + \frac{1}{2}\int_{-\frac{1}{2}}^{\frac{1}{2}} \ln (1 +\frac{\sigma_w^2}{\sigma_v^2}) d\theta\\
\lim_{n\rightarrow \infty}\frac{1}{n}I(x^{n}\rightarrow e^n)=&\int_{-\frac{1}{2}}^{\frac{1}{2}} \ln{|\mathcal{S}(e^{j2\pi \theta})|} d\theta \\
\lim_{n\rightarrow \infty}\frac{1}{n}I(y^n\rightarrow e^n|x_0)=&\frac{1}{2}\int_{-\frac{1}{2}}^{\frac{1}{2}} \ln (1 +\frac{\sigma_w^2}{\sigma_v^2}) d\theta\\
\end{split}
\end{equation*}
\end{theorem}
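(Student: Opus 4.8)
The plan is to reduce each of the three directed-information rates to a difference of differential-entropy rates of the closed-loop Gaussian signals, to evaluate the causally conditioned (``innovation'') pieces \emph{exactly} from the structure of the AWGN channels, and to obtain the remaining quantity $\lim_n \tfrac1n h(e^n)$ from Lemma~\ref{entropy_rate} once the power spectral density of $e$ has been identified via the sensitivity function $\mathcal{S}$.

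\textbf{Step 1 (signal model).} Reading the loop off Fig.~\ref{Fig:general_feedback_system}(right): $S_1$ is a strictly causal LTI block driven by $e$ and carrying the external input $x_0$ as an initial condition, with output $x$; channel $C_1$ gives $y_i = x_i + w_i$; $S_2$ is unit gain; channel $C_2$ gives $e_i = y_i + v_i$; and $w,v$ are independent white Gaussians of variances $\sigma_w^2,\sigma_v^2$. Solving the loop, $e = \mathcal{S}\,(w+v) + \zeta$ with $\mathcal{S}(e^{j2\pi\theta})$ the sensitivity function of Definition~2 and $\zeta$ the response to $x_0$. With the loop well posed and internally stable, $e^n$ is asymptotically stationary Gaussian with $S_e(e^{j2\pi\theta}) = |\mathcal{S}(e^{j2\pi\theta})|^2(\sigma_w^2+\sigma_v^2)$, so Lemma~\ref{entropy_rate} gives
\[
\lim_{n\to\infty}\tfrac1n h(e^n)=\tfrac12\ln\!\big(2\pi e(\sigma_w^2+\sigma_v^2)\big)+\int_{-\frac12}^{\frac12}\ln|\mathcal{S}(e^{j2\pi\theta})|\,d\theta .
\]

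\textbf{Step 2 (reduction and exact conditional entropies).} By the definition of directed information and the chain rule, with $h(e^n\|u^n):=\sum_{i=1}^{n}h(e_i\mid e^{i-1},u^i)$, one has $I(y^n\to e^n)=h(e^n)-h(e^n\|y^n)$, $I(x^n\to e^n)=h(e^n)-h(e^n\|x^n)$, and $I(y^n\to e^n\mid x_0)=h(e^n\mid x_0)-\sum_{i=1}^{n}h(e_i\mid e^{i-1},y^i,x_0)$. The channel structure makes each conditional term a pure innovation: given $(e^{i-1},y^i)$ we have $e_i=y_i+v_i$ with $v_i$ independent of the conditioning, so $h(e_i\mid e^{i-1},y^i)=h(e_i\mid e^{i-1},y^i,x_0)=\tfrac12\ln 2\pi e\,\sigma_v^2$; and given $(e^{i-1},x^i)$ (note $x^i$ is a function of $e^{i-1}$ and $x_0$, so $x_i$ is known), the only fresh randomness in $e_i=x_i+w_i+v_i$ is $w_i+v_i$, so $h(e_i\mid e^{i-1},x^i)=h(e_i\mid e^{i-1},x_0)=\tfrac12\ln 2\pi e\,(\sigma_w^2+\sigma_v^2)$. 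Summing, $h(e^n\|y^n)=\tfrac n2\ln 2\pi e\,\sigma_v^2$ and $h(e^n\|x^n)=h(e^n\mid x_0)=\tfrac n2\ln 2\pi e\,(\sigma_w^2+\sigma_v^2)$, exactly for every $n$.

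\textbf{Step 3 (assembly).} Dividing by $n$, letting $n\to\infty$, and using Step~1: $\tfrac1n I(x^n\to e^n)\to\int_{-1/2}^{1/2}\ln|\mathcal{S}|\,d\theta$; $\tfrac1n I(y^n\to e^n)\to\int_{-1/2}^{1/2}\ln|\mathcal{S}|\,d\theta+\tfrac12\ln(1+\sigma_w^2/\sigma_v^2)$; and $\tfrac1n I(y^n\to e^n\mid x_0)=\tfrac12\ln(1+\sigma_w^2/\sigma_v^2)$ already for finite $n$. Since $\ln(1+\sigma_w^2/\sigma_v^2)$ is constant in $\theta$, $\tfrac12\ln(1+\sigma_w^2/\sigma_v^2)=\tfrac12\int_{-1/2}^{1/2}\ln(1+\sigma_w^2/\sigma_v^2)\,d\theta$, giving exactly the three claimed expressions; they obey the conservation identity of Theorem~1, so the first line could instead be obtained by adding the other two. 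The one step that is not exact algebra on the relations $e_i=y_i+v_i=x_i+w_i+v_i$ --- and the one I expect to be the main obstacle --- is Step~1: showing the closed loop is well posed and internally stable, that $e$ is therefore asymptotically stationary Gaussian with the stated spectrum, and that Lemma~\ref{entropy_rate} does compute its entropy rate in the presence of the transient $\zeta$ (and, when $\mathcal{S}$ is non-minimum-phase, despite the fact that the entropy rate of the loop \emph{conditioned on $x_0$} --- effectively started from rest --- falls short of the spectral value by exactly the Bode term $\int_{-1/2}^{1/2}\ln|\mathcal{S}|\,d\theta$, which only the randomness of the unconditioned $x_0$ restores).
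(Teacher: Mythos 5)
Your proposal is correct and follows essentially the same route as the paper: reduce each directed information to a difference of entropies, evaluate the causally conditioned terms exactly from the whiteness of $v$ and $w+v$ (the paper writes these as $h(v^n)$ and $h(v^n+w^n)$, which equal your closed-form values), and obtain $\lim_n \tfrac1n h(e^n)$ from Lemma~\ref{entropy_rate} with $S_e = |\mathcal{S}|^2(\sigma_v^2+\sigma_w^2)$. The only difference is that you explicitly flag the well-posedness/asymptotic-stationarity of the closed loop and the role of the randomness of $x_0$ in reconciling $h(e^n)$ with $h(e^n\mid x_0)$ when $\int\ln|\mathcal{S}|\,d\theta \neq 0$ --- a point the paper's proof takes for granted.
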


\begin{proof}
We prove the above three equalities one by one.
(1). First of all, we have
\begin{equation*}
\begin{split}
I(y^n\rightarrow e^n)=&\sum_{i=1}^{n}I(y^i, e_i|e^{i-1})\\
=&\sum_{i=1}^{n}h(e_i|e^{i-1})-h(e_i|e^{i-1}, y^i)\\
=&\sum_{i=1}^{n}h(e_i|e^{i-1})-h(y_i+ v_i|y_1+v_1, y_2+v_2,\cdots, y_{i-1}+v_{i-1}, y^i)\\
=&\sum_{i=1}^{n}h(e_i|e^{i-1})-h(v_i|v_1, v_2,\cdots,v_{i-1}, y^i)\\
=&\sum_{i=1}^{n}h(e_i|e^{i-1})-h(v_i|v^{i-1})\\
=&h(e^{n})-h(v^{n})\\
\end{split}
\end{equation*}
Then, according to Lemma \ref{entropy_rate},
\begin{equation*}
\begin{split}
\lim_{n\rightarrow \infty}\frac{1}{n}I(y^n\rightarrow e^n)=&\lim_{n\rightarrow \infty}\frac{1}{n}h(e^{n})-h(v^{n})\\
=&\frac{1}{2}\int_{-\frac{1}{2}}^{\frac{1}{2}} \ln{2\pi e S_e(e^{j\theta})}d\theta-\frac{1}{2}\int_{-\frac{1}{2}}^{\frac{1}{2}} \ln{2\pi e S_v(e^{j\theta})}d\theta\\
=& \frac{1}{2}\int_{-\frac{1}{2}}^{\frac{1}{2}} \ln{\frac{S_e(e^{j2\pi \theta})}{S_v(e^{j2\pi \theta})}}d\theta\\
=& \frac{1}{2}\int_{-\frac{1}{2}}^{\frac{1}{2}} \ln{\frac{|\mathcal{S}(e^{j2\pi \theta})|^2(S_v(e^{j2\pi \theta})+S_w(e^{j2\pi \theta})) }{S_v(e^{j2\pi \theta})}}d\theta\\
=& \frac{1}{2}\int_{-\frac{1}{2}}^{\frac{1}{2}} \ln{\frac{|\mathcal{S}(e^{j2\pi \theta})|^2(\sigma_v^2+\sigma_w^2)}{\sigma_v^2}}d\theta\\
=& \int_{-\frac{1}{2}}^{\frac{1}{2}} \ln{|\mathcal{S}(e^{j2\pi \theta})|} d\theta + \frac{1}{2}\int_{-\frac{1}{2}}^{\frac{1}{2}} \ln (1 +\frac{\sigma_w^2}{\sigma_v^2}) d\theta\\
\end{split}
\end{equation*}

(2). Next,
\begin{equation*}
\begin{split}
I(x^{n}\rightarrow e^n)=&\sum_{i=1}^{n}I(x^i, e_i|e^{i-1})\\
=&\sum_{i=1}^{n}h(e_i|e^{i-1})-h(e_i|e^{i-1}, x^i)\\
=&\sum_{i=1}^{n}h(e_i|e^{i-1})-h(x_i+ w_i + v_i|x_1+w_1+v_1, x_2+w_2+v_2,\cdots, x_{i-1}+w_{i-1}+v_{i-1}, x^i)\\
=&\sum_{i=1}^{n}h(e_i|e^{i-1})-h(w_i + v_i|w_1+v_1, w_2+v_2,\cdots,w_{i-1}+v_{i-1}, x^i)\\
=&\sum_{i=1}^{n}h(e_i|e^{i-1})-h(w_i + v_i|w^{i-1}+v^{i-1})\\
=&h(e^{n})-h(v^{n}+w^n)\\
\end{split}
\end{equation*}
Similarly, we have
\begin{equation*}
\begin{split}
\lim_{n\rightarrow \infty}\frac{1}{n}I(x^{n}\rightarrow e^n)=&\lim_{n\rightarrow \infty}\frac{1}{n}h(e^{n})-h(v^{n}+w^n)\\
=&\frac{1}{2}\int_{-\frac{1}{2}}^{\frac{1}{2}} \ln{2\pi e S_e(e^{j\theta})}d\theta-\frac{1}{2}\int_{-\frac{1}{2}}^{\frac{1}{2}} \ln{2\pi e S_w(e^{j\theta})+S_v(e^{j\theta})}d\theta\\
=& \frac{1}{2}\int_{-\frac{1}{2}}^{\frac{1}{2}} \ln{\frac{S_e(e^{j2\pi \theta})}{S_w(e^{j2\pi \theta})+S_v(e^{j2\pi \theta})}}d\theta\\
=& \frac{1}{2}\int_{-\frac{1}{2}}^{\frac{1}{2}} \ln{\frac{|\mathcal{S}(e^{j2\pi \theta})|^2(\sigma_v^2+\sigma_w^2)}{\sigma_v^2+\sigma_w^2}}d\theta\\
=& \int_{-\frac{1}{2}}^{\frac{1}{2}} \ln{|\mathcal{S}(e^{j2\pi \theta})|} d\theta \\
\end{split}
\end{equation*}

(3). Finally, we need to show $\lim_{n\rightarrow \infty}\frac{1}{n}I(y^n\rightarrow e^n|x_0) = \int_{-\frac{1}{2}}^{\frac{1}{2}} \ln (1 +\frac{\sigma_w^2}{\sigma_v^2}) d\theta$.
\begin{equation*}
\begin{split}
I(y^n\rightarrow e^n|x_0)=&\sum_{i=1}^{n}I(y^i, e_i|e^{i-1}, x_0)\\
=&\sum_{i=1}^{n}h(e_i|e^{i-1}, x_0)-h(e_i|e^{i-1}, y^i, x_0)\\
=&\sum_{i=1}^{n}h(e_i|e^{i-1}, x_0)-h(y_i+ v_i|y_1+v_1, y_2+v_2,\cdots, y_{i-1}+v_{i-1}, y^i, x_0)\\
=&\sum_{i=1}^{n}h(e_i|e^{i-1}, x_0)-h(v_i|v_1, v_2,\cdots,v_{i-1}, y^i, x_0)\\
=&\sum_{i=1}^{n}h(e_i|e^{i-1}, x_0)-h(v_i|v^{i-1})\\
=&\sum_{i=1}^{n}h(x_i+ w_i + v_i|x_1+w_1+v_1, x_2+w_2+v_2,\cdots, x_{i-1}+w_{i-1}+v_{i-1}, x_0)-h(v_i|v^{i-1})\\
\end{split}
\end{equation*}
Because $S_1$ is assumed to be a causal LTI system $\lbrace g_i\rbrace_{i=1}^\infty$, the system output $x_i = g_i(x^{i-1},e^{i-1}) = g_i(x^{i-1},w^{i-1}+v^{i-1})$ and $x_1 = g_1(x_0)$. Thus,
\begin{equation*}
\begin{split}
I(y^n\rightarrow e^n|x_0)=&\sum_{i=1}^{n}h(x_i+ w_i + v_i|x_1+w_1+v_1, x_2+w_2+v_2,\cdots, x_{i-1}+w_{i-1}+v_{i-1}, x_0)-h(v_i|v^{i-1})\\
=&\sum_{i=1}^{n}h(x_i+ w_i + v_i|x_1, w_1+v_1, x_2+w_2+v_2,\cdots, x_{i-1}+w_{i-1}+v_{i-1}, x_0)-h(v_i|v^{i-1})\\
=&\sum_{i=1}^{n}h(x_i+ w_i + v_i|x_1, w_1+v_1, x_2, w_2+v_2,\cdots, x_{i-1}+w_{i-1}+v_{i-1}, x_0)-h(v_i|v^{i-1})\\
=&\sum_{i=1}^{n}h(w_i + v_i|w_1+v_1, w_2+v_2,\cdots,w_{i-1}+v_{i-1}, x^i)-h(v_i|v^{i-1})\\
=& h(w^n+v^n) - h(v^n)\\
\end{split}
\end{equation*}

Then, we have
\begin{equation*}
\begin{split}
\lim_{n\rightarrow \infty}\frac{1}{n}I(y^n\rightarrow e^n|x_0) = \lim_{n\rightarrow \infty}\frac{1}{n}h(w^n+v^n) - h(v^n)=\frac{1}{2}\int_{-\frac{1}{2}}^{\frac{1}{2}} \ln (1 +\frac{\sigma_w^2}{\sigma_v^2}) d\theta\\
\end{split}
\end{equation*}
\end{proof}

\section{Conclusion}
We characterized the quantities in the law of conservation of information flows for distributed LTI systems connected with AWGN channels. The characterizations are represented in sensitivity function and verifies the law of conservation.
\bibliographystyle{IEEEtran}
\bibliography{ref}

\end{document}